\newtheorem{theorem}{Theorem}[section]
\theoremstyle{plain}
\theoremstyle{definition}
\newtheorem*{thm}{Theorem}
\begin{document}

\author[1]{\small Gustaf Arrhenius}
\author[1, 2]{\small Klas Markstr\"om}

\affil[1]{\footnotesize Institute for Futures Studies, 101 31, Stockholm, Sweden}
\affil[2]{\footnotesize Department of Mathematics and Mathematical Statistics, Ume\aa\ University ,SE-901 87 Ume\aa , Sweden}

\title{More, better or different?\\
Trade-offs between group size and competence development in jury theorems}
\maketitle

\begin{abstract}
	In many circumstances there is a trade off between the number of voters and the time they can be given before having to make a decision since both aspects are costly. An example is the hiring of 
	a committee with a fixed salary budget: more people but a shorter time for each to develop their competence about the issue at hand or less people with a longer time for competence development?  In this paper we investigate the interaction between the number of voters, the development of their competence over 
	time and the final probability for an optimal majority decision.  Among other things we consider how different learning profiles, or rates of relevant competence increase,  for the 
	members of a committee affects the optimal committee size. 
	
	To the best of our knowledge, our model is the first that includes the potentially positive effects of having a heterogeneous group of voters on majority decisions in a satisfactory way. We also discuss how 
	some earlier attempts fail to capture the effect of heterogeneity correctly.

\end{abstract}


\section{Introduction}
The various perceived benefits of collective decision making were described very early on. Aristotle, for example, writes:

\blockquote{
For the many, who are not as individuals excellent men, nevertheless can, when they have come together, be better than the few best people, not individually but collectively, 
just as feasts to which many contribute are better than feasts provided at one person's expense. For being many, each of them can have
some part of virtue and practical wisdom, and when they come together,
the multitude is just like a single human being, with many feet, hands, 
and senses, and so too for their character traits and wisdom. That is why
the many are better judges of works of music and of the poets. For one of
them judges one part, another another, and all of them the whole thing.}{--Aristotle,  Politics \cite{Reeve}}

First out in this description is the idea that a collective decision could be taken with greater competence than that of any of the individual decision makers. Indeed, this is the positive half of Condorcet's celebrated jury theorem:
\begin{theorem}[Condorcet's jury theorem]
	Given an odd number, $n$, of independent jurors, each of which votes for the correct verdict with a fixed probability $1>p>1/2$,   the probability $P_n$ of a correct verdict grows monotonically to 1 as $n\rightarrow \infty$
\end{theorem}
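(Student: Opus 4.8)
The plan is to translate the statement into a question about sums of independent Bernoulli variables and then handle the two assertions---convergence to $1$ and monotonicity---one at a time. Write $X_i$ for the indicator that juror $i$ votes correctly, so that $X_1,\dots,X_n$ are i.i.d.\ with $\Pr[X_i=1]=p$, and set $S_n=\sum_{i=1}^n X_i$. Since $n$ is odd, a majority verdict is correct exactly when $S_n\ge (n+1)/2$, so $P_n=\Pr[S_n\ge (n+1)/2]$.

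For the limit I would use concentration of $S_n/n$ around its mean $p$. Put $\varepsilon=p-\tfrac12>0$; then the verdict is wrong only if $S_n/n\le \tfrac12=p-\varepsilon$, hence $1-P_n\le \Pr[\,|S_n/n-p|\ge \varepsilon\,]$, which tends to $0$ by the weak law of large numbers---or, if a rate is wanted, by Hoeffding's inequality, giving $1-P_n\le e^{-2\varepsilon^2 n}$. Either way $P_n\to 1$.

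Monotonicity is where the work lies. The natural approach is to compare $P_{n+2}$ with $P_n$ (both indices odd) by conditioning on the outcomes of the first $n$ jurors. Adding two jurors can flip the verdict only in the two knife-edge configurations: if exactly $(n+1)/2$ of the first $n$ are correct the verdict was correct and becomes wrong precisely when both new jurors err, with conditional probability $(1-p)^2$; if exactly $(n-1)/2$ are correct the verdict was wrong and becomes correct precisely when both new jurors are right, with conditional probability $p^2$. In every other case the verdict is unchanged, so
\[
P_{n+2}-P_n \;=\; p^2\,\Pr\!\left[S_n=\tfrac{n-1}{2}\right]-(1-p)^2\,\Pr\!\left[S_n=\tfrac{n+1}{2}\right].
\]
Because $\binom{n}{(n-1)/2}=\binom{n}{(n+1)/2}$, the two point probabilities differ only by a factor $(1-p)/p$, namely $\Pr[S_n=(n-1)/2]=\tfrac{1-p}{p}\Pr[S_n=(n+1)/2]$; substituting this collapses the right-hand side to $(2p-1)(1-p)\,\Pr[S_n=(n+1)/2]$, which is strictly positive since $\tfrac12<p<1$. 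Hence $P_{n+2}>P_n$ for every odd $n$, and together with the limit this is the full claim.

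The step I expect to be the main obstacle is the bookkeeping in the monotonicity argument: one has to be careful that for odd $n$ a majority means a lead of at least one vote, so that exactly the two boundary cases above are affected, and one must invoke the binomial-coefficient symmetry at just the right moment so that the positive term $p^2$ and the negative term $(1-p)^2$ combine into something manifestly positive. The convergence half is essentially immediate from standard concentration estimates.
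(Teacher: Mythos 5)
Your proposal is correct, and it is in fact more complete than the argument the paper itself supplies. For the convergence half you and the paper do essentially the same thing: the paper's appendix proof bounds $\Pr(X<n/2)$ via Hoeffding's concentration inequality applied to the sum of independent Bernoulli votes, which is exactly your $1-P_n\le e^{-2\varepsilon^2 n}$ estimate with $\varepsilon=p-\tfrac12$. The difference is in the monotonicity half: the paper's proof derives only the limit $P_n\to 1$ and then simply asserts that this ``yields'' the classical statement, so the monotone growth of $P_n$ is never actually established there. Your conditioning argument --- comparing $P_{n+2}$ with $P_n$, isolating the two knife-edge configurations $S_n=\tfrac{n\pm 1}{2}$, and using the binomial symmetry $\binom{n}{(n-1)/2}=\binom{n}{(n+1)/2}$ to reduce the difference to $(2p-1)(1-p)\Pr[S_n=\tfrac{n+1}{2}]>0$ --- is the standard and correct way to fill that gap, and your case bookkeeping (only $Y=0$ can break a one-vote lead, only $Y=2$ can overturn a one-vote deficit) checks out. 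In short: same concentration argument for the limit, plus a genuine extra ingredient for monotonicity that the paper's own proof leaves implicit.
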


However, there are some aspects already in Aristotle's' description of the benefits of collective decision making which do not fit into Condorcet's theorem in its original form, and also situations where an application of the theorem yield conclusions which contradict practical experience. An example of the first aspect is given by Aristotle's emphasis on how a variety among the decision makers can improve their collective competence. In the basic form of the jury theorem there is no room for such effects since all individual competences are fixed and the same. Moreover, as we shall soon see,  the existing attempts to incorporate this aspect in generalised jury theorems still fall short. An example of the second type comes from the composition of committees. The application of the jury theorem to the question of how to set up a committee in order to get the highest probability of the correct decisions yields that we should make the committee as large as possible as our financial resources allow. That committees simply become better the larger they are doesn't square well with empirical observations and, as for example \cite{Fr82} has shown, for large enough committees, the internal structure becomes crucial for how competent it will be.

We claim that the missing component in the various versions of the jury theorem is the fact that typically some amount of time  will pass from the  point when an election is announced, or a committee is formed, to the point when votes are cast, and this gives room for competence change among the voters. How the individual competence improves during this time, either by an individual's own work, deliberation, or by different forms of interaction with the other voters in the group, is affected by the variety of voter background, as in Aristotle's description above, and influences the optimal structure of an efficient committee.  

We will first take a quick look at how Condorcet's original theorem has been generalised in order to cover more realistic situations, and then outline a way to include developing competences over time in the theorem. 

The conditions in Condorcet's theorem are quite strict: the jurors are independent and $p$ is fixed and equal for all jurors.  Dietrich and Spiekerman \cite{DS17,DS23} provide good surveys and additional critiques of these assumptions. However, one can easily show that these conditions can be relaxed substantially.  We can have jurors numbered $i=1,\ldots, n$  and let each have an individual probability $p_i$.  If we now let $p$ denote the average $p=\frac{1}{n}\sum_{i=1}^n p_i$  then the theorem still  holds \cite{Bo89}.   We also do not need to keep $p$ independent of $n$, a number of different theorems on how close to its mean a random variable is likely to be, imply that as long as $(p-1/2)\sqrt{n}\rightarrow \infty$ the theorem still holds \cite{BP98}.  That is, as the number of voters grow we can allow $p$ to be just a bit larger than $1/2$  plus 1 over the square root of the number voters.

Finally, and perhaps least well-known,  it is not hard to show that instead of requiring that the jurors are independent, it is enough to require that the average size of their pairwise correlations is not too high. This was shown already in \cite{La92}, but has also been considered in more recent papers \cite{KZ11}.  One important corollary here is that  \emph{negative} correlations are in fact beneficial for the probability $P_n$, rather than a problem, and the more negative they are the better. 

Some papers \cite{SK17} have tried to use variation in individual probabilities together with correlations to explain the benefits of a heterogeneous electorate in some specific circumstances.  The focus in those papers was on the elimination of the detrimental effects of underlying biases in the electorate, rather than on the positive effects of different forms of background competence.

\section{Earlier results}
In this section we will review some relevant existing variations of Condorcet's original jury theorem, including some which aims at balancing the cost for salaries versus the cost incurred by incorrect decisions. However, unlike the results we shall show, the individual competence in all these results is static.

We will here use $X_i$ both to refer to the $i$th voter, and the 0/1-valued  random variable which is the vote of that voter.  For  $X_1, X_2, \dots$ we assume  that $\mathrm{Pr}(X_i=1)=p_i$
 and  $\mathrm{Pr}(X_i=0)=1-p_i$.  We let  $Z_n=\sum_{i=1}^n X_i $.  We assume that a value of 1 is a vote for the correct alternative, so for  a simple majority decision we are interested in  $\mathrm{Pr}(Z_n>\frac{n}{2})$. We will also use the following notation:  ${\bf p}=(p_1,p_2,\ldots,p_n)$ and ${\bf p}(p)=(p,p,\ldots,p)$.  Note that, unless we specify otherwise, the vectors ${\bf p}$ for different $n$ are not correlated in any way. 

For independent voters some of the basic variations of the original theorem are included in the following theorem, for which we include a proof in the appendix. We will throughout the paper assume that $n$ is odd, but the results hold for even $n$ as well if a tie-breaking rule, such as a fair coin-toss, is added. 

\begin{theorem}
	Let $P_n(\bf{p})$ denote the probability of a majority for the correct outcome and define the average competence as $\bar{p}=\frac{1}{n}\sum_{i=1}^n p_i$.
	\begin{enumerate}
		\item  If $p>1/2$ is  fixed  then $P_n({\bf p}(p)) \rightarrow 1$ monotonically with $n$. 
		
		\item   If $ \bar{p}= p> \frac{1}{2} $ for some fixed $p$ then $P_n \rightarrow 1$ and $P_n({\bf p })\geq P_n({\bf p}(p))$.

		\item  If $\bar{p}=\frac{1}{2}+\frac{\omega(n)}{\sqrt{n}}$, where $\omega(n)$ is any increasing, unbounded, function of $n$, then $P_n \rightarrow 1$.
		
	\end{enumerate}	
\end{theorem}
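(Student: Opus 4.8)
The plan is to prove the three parts in order, since each rests on a slightly stronger tool than the previous one.

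**Part 1.** For fixed $p > 1/2$ the cleanest route is a coupling argument for monotonicity together with a concentration bound for the limit. For monotonicity, compare a jury of size $n$ (odd) with one of size $n+2$: add two new independent voters, and check by conditioning on the outcome among the first $n$ and on the two new votes that the probability of a correct majority cannot decrease — the only cases that matter are ties broken by the new pair, and since each new voter is correct with probability $p>1/2$, the net effect is nonnegative. (Alternatively one invokes the classical Condorcet theorem, already stated in the excerpt, for both statements.) For convergence to $1$, apply the weak law of large numbers or a Chernoff/Hoeffding bound to $Z_n = \sum X_i$: since $\mathbb{E}[Z_n/n] = p > 1/2$, we get $\mathrm{Pr}(Z_n \le n/2) \le \exp(-c n)$ for a constant $c = c(p) > 0$, so $P_n({\bf p}(p)) \to 1$.

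**Part 2.** Convergence $P_n \to 1$ follows exactly as above once we observe $\mathbb{E}[Z_n/n] = \bar p = p > 1/2$ is still bounded away from $1/2$, so Hoeffding applies verbatim (the $X_i$ are independent but not identically distributed, which Hoeffding allows). The genuinely interesting claim is $P_n({\bf p}) \ge P_n({\bf p}(p))$: mixing the competences hurts you relative to averaging them. The natural approach is to fix the average and show $P_n$ is a \emph{Schur-concave} function of ${\bf p}$ on the simplex $\{\sum p_i = np\}$, so that the ``most balanced'' point ${\bf p}(p)$ is the minimizer. Concretely, it suffices to verify that for any two coordinates $p_i, p_j$ with $p_i + p_j$ held fixed, $P_n$ is a concave function of $p_i - p_j$; by symmetry $P_n$ is even in $p_i - p_j$, so concavity forces the minimum at $p_i = p_j$, and repeatedly equalizing pairs drives ${\bf p}$ to ${\bf p}(p)$. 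To check the concavity, write $P_n$ by conditioning on all votes except $X_i, X_j$: $P_n = a \cdot p_i p_j + b \cdot (p_i + p_j - 2p_i p_j) + c \cdot (1-p_i)(1-p_j)$ where $a,b,c$ are probabilities depending only on the other voters, with $a$ the chance the rest already give a strict majority, $c$ the chance they are two short, $b$ the chance of a single-vote margin either way; the coefficient of $p_i p_j$ is $a - 2b + c$, and the second derivative in $t = p_i - p_j$ is proportional to $-(a - 2b + c)$, so concavity reduces to the inequality $2b \ge a + c$, i.e.\ the ``knife-edge'' probability dominates the average of the ``already-decided'' probabilities. This last inequality is a statement about the unimodality/log-concavity of the distribution of a sum of independent Bernoullis and is where the real work sits.

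**Part 3.** Here the mean $\bar p = \tfrac12 + \omega(n)/\sqrt n$ drifts toward $1/2$, so a fixed exponential bound is unavailable; instead apply a concentration inequality with the right scaling. Writing $Z_n - \mathbb{E}[Z_n] = Z_n - n\bar p$, a correct majority fails only if $Z_n \le n/2$, i.e.\ if $Z_n - \mathbb{E}[Z_n] \le -(n\bar p - n/2) = -\sqrt n\,\omega(n)$. Hoeffding's inequality for independent bounded variables gives
\[
\mathrm{Pr}(Z_n \le n/2) \;\le\; \exp\!\left(-\frac{2(\sqrt n\,\omega(n))^2}{n}\right) \;=\; \exp\!\left(-2\,\omega(n)^2\right),
\]
which tends to $0$ since $\omega(n) \to \infty$. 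Hence $P_n \to 1$. (This is essentially the Berend--Paroush-type condition $(\bar p - 1/2)\sqrt n \to \infty$ mentioned in the introduction.)

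The main obstacle is the inequality $2b \ge a+c$ underpinning the Schur-concavity in Part 2. It is equivalent to showing that for the sum $S$ of the $n-2$ remaining independent Bernoulli variables, the point-mass sequence $\mathrm{Pr}(S = k)$ is such that the probability of landing exactly on the tie-threshold is at least the average of the probabilities of landing one step above and one step below the decisive band — a concavity/log-concavity property of convolutions of Bernoullis. I would establish it either by the classical fact that a convolution of independent Bernoulli distributions is log-concave (Newton's inequalities / the fact that the generating polynomial has only real roots), or, if one wants a self-contained argument, by an explicit two-coordinate swapping computation. Everything else — the couplings, the Hoeffding applications — is routine.
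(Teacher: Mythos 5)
Your proposal is correct in outline and, for the convergence claims, follows the same route as the paper: the appendix proof applies Hoeffding's concentration inequality to $Z_n$ exactly as you do in Parts 1 and 3, obtaining $\Pr(Z_n<n/2)\leq 2\exp(-\omega(n)^2)$ and deducing both limits at once. Where you differ is that you actually supply arguments for the two claims the paper delegates. The paper never proves the monotonicity in Part 1 (concentration alone cannot give it); your coupling of an $n$-jury with an $(n+2)$-jury is the standard and correct fix, though note that for odd $n$ the decisive events are not ties but one-vote margins: the gain is $\Pr(Z_n=\tfrac{n-1}{2})p^2$, the loss is $\Pr(Z_n=\tfrac{n+1}{2})(1-p)^2$, and their difference is $\Pr(Z_n=\tfrac{n-1}{2})\,p(2p-1)\geq 0$. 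For the inequality $P_n({\bf p})\geq P_n({\bf p}(p))$ in Part 2 the paper simply cites Hoeffding (1956); your Schur-concavity reduction via equalizing two coordinates with $p_i+p_j$ fixed is precisely how that theorem is proved, and your condition $2b\geq a+c$ (with the coefficients as they appear in your formula, i.e.\ $a=\Pr(S\geq m-2)$, $b=\Pr(S\geq m-1)$, $c=\Pr(S\geq m)$ for $m=\tfrac{n+1}{2}$ and $S$ the sum of the other $n-2$ votes) telescopes to exactly the right inequality $\Pr(S=m-1)\geq \Pr(S=m-2)$ --- though your verbal description of $a$ and $c$ has them swapped relative to the formula.

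The one genuine gap is in how you propose to finish Part 2. Log-concavity of the Poisson--binomial mass function (via real-rootedness of the generating polynomial) gives only unimodality; it does not by itself yield $\Pr(S=m-1)\geq \Pr(S=m-2)$. You additionally need to locate the mode of $S$ at or above $m-1=\tfrac{n-1}{2}$, and this is where the hypothesis $\bar p>1/2$ must enter --- indeed Hoeffding's theorem requires the threshold to lie below the mean ($n\bar p\geq \lceil n/2\rceil$), which is why the cited result carries such a side condition. Since the mean of $S$ is only guaranteed to exceed $n/2-2$ after removing two coordinates, making the equalization step rigorous for all configurations encountered along the way takes some care; this is exactly the ``real work'' you flag but do not carry out, and it is the one place where your argument is not yet a proof.
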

Part 1 here is the classical Condorcet jury theorem. As has often been pointed out this basic version is based on stronger  assumptions than any real life situation is likely to satisfy, see for example \cite{dietrich_2008} for an in depth discussion of this. Part 3 is a version of the result from \cite{BP98}.

Part 2  is a combined version of results from \cite{Bo89,Owen89}, which established that one can use the value of $\bar{p}$ instead of a homogeneous value $p$, and \cite{KANA}, who proved that the homogeneous situation with probabilities given by $\bf{p}$ actually gives the lowest probability for correct decision. At a first glance this might be interpreted as a confirmation of the idea that heterogeneity is desirable, but this is a misleading interpretation.  What the statement says is that if we take two juries with exactly the same size and value for $\bar{p}$, where one is the homogeneous jury given by ${\bf p}(\bar{p})$ and the other by some heterogeneous ${\bf p}$, then the latter will have a higher probability for a correct decision. However the mechanism behind this is really based on the fact that the heterogeneous jury must, in order to be both heterogeneous and have the same $\bar{p}$, have several members with higher competence $p_i$ than $\bar{p}$, and their influence on the probability outweighs the effect the low competence members. 

In fact, an earlier theorem by Hoeffding \cite{hoeffding1956} identifies the exact jury composition which maximises the probability for a correct decision, with a given $\bar{p}$. This is given by having $\lfloor \bar{p}  n \rfloor$ members with $p_i=1$, one with $p_i=  \bar{p}  n -\lfloor \bar{p}  n \rfloor$, and the remaining $n-\lfloor \bar{p}  n \rfloor$ with $p_i=0$.  So for $n=3$  and $\bar{p}=0.7$ we could  have had a jury with ${\bf p}=(0.7,0.7,0.7)$, giving a non-zero probability for an incorrect decision,  and Hoeffding's theorem shows that for this value of $\bar{p} $ the maximum probability for a correct decision is reached by a jury with ${\bf p}=(1,1,0.1)$.   For $\bar{p}>1/2$ a  jury of the form identified by Hoeffding  \emph{always} makes a correct  majority decision, but one can hardly claim that this is due to a positive de-homogenising effect by those jurors which have $p_i<1$. Hoeffding's theorem was further refined by Glessel \cite{Gle75} in a way which provides a simple condition for deciding which of two jury compositions lead to the highest probability for a correct decision, a result which is applicable for any finite size $n$.   With two juries given by ${\bf p }^a$ and ${\bf p }^b$  we say that jury $a$ \emph{majorizes} jury $b$  if $\sum_{i=1}^j{\bf p }_i^a \geq \sum_{i=1}^j{\bf p}_i ^b$ for every $j=1, 2,\ldots n $, and Glessel's result then states that if $a$ majorizes $b$ then $a$ has the higher probability for a correct decision.

One can also obtain valid forms of the jury theorem for situations where the jurors are no longer independent. The following theorem was proven by Ladha in \cite{La92}. 
Recall that by standard definitions $\mathrm{Var}(X_i)=p_i(1-p_i)$ and $\mathrm{Cov}(X_i,X_j)=\mathbb{E}(X_i X_j)-p_ip_j$.
\begin{theorem}\label{corrthm}
	We use the same terminology as in the previous theorem but now we allow the $X_i$ to be dependent. 
	Let $\bar{p}=\frac{1}{n}\sum_{i=1}^n p_i$,  $d=n(\bar{p}-1/2)$,  and 
	$$\sigma^2=\sum_i \mathrm{Var}(X_i)+2\sum_{i<j}\mathrm{Cov}(X_i,X_j)$$.
		
	Then $P_n({\bf p}(p)) \geq   \frac{d^2}{\sigma^2+d^2}$
\end{theorem}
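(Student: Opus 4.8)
The plan is to obtain the bound from the one-sided Chebyshev (Cantelli) inequality applied to the centred sum $Z_n-\mathbb{E}[Z_n]$. First I would record the two moments of $Z_n=\sum_{i=1}^n X_i$ that appear in the statement: by linearity $\mathbb{E}[Z_n]=\sum_{i=1}^n p_i=n\bar p$, and by bilinearity of covariance
\[
\mathrm{Var}(Z_n)=\sum_i \mathrm{Var}(X_i)+2\sum_{i<j}\mathrm{Cov}(X_i,X_j)=\sigma^2 .
\]
Note that independence is not used here; only that the pairwise covariances enter with the stated multiplicities. If $\sigma^2=0$ the claimed bound is trivially at most $1$, so we may assume $\sigma^2>0$, and throughout we use the standing hypothesis $\bar p>1/2$, i.e.\ $d=n(\bar p-1/2)>0$.

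Next I would rewrite the majority event as a one-sided deviation event. Since $n/2=n\bar p-d=\mathbb{E}[Z_n]-d$,
\[
P_n=\mathrm{Pr}\!\left(Z_n>\tfrac{n}{2}\right)=\mathrm{Pr}\!\big(Z_n-\mathbb{E}[Z_n]>-d\big)=1-\mathrm{Pr}\!\big(Z_n-\mathbb{E}[Z_n]\le -d\big).
\]
So it suffices to bound the lower tail $\mathrm{Pr}(Z_n-\mathbb{E}[Z_n]\le -d)$ from above by $\sigma^2/(\sigma^2+d^2)$; subtracting from $1$ then gives exactly $d^2/(\sigma^2+d^2)$.

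The core step is Cantelli's inequality, which I would either cite or prove in two lines: for a random variable $Y$ with finite variance $v$ and any $a>0$, the event $\{\mathbb{E}[Y]-Y\ge a\}$ implies $\{\mathbb{E}[Y]-Y+t\ge a+t>0\}$ for every $t\ge 0$, so Markov's inequality applied to the nonnegative variable $(\mathbb{E}[Y]-Y+t)^2$ gives
\[
\mathrm{Pr}\big(Y-\mathbb{E}[Y]\le -a\big)\le\frac{\mathbb{E}\big[(\mathbb{E}[Y]-Y+t)^2\big]}{(a+t)^2}=\frac{v+t^2}{(a+t)^2},
\]
and minimising the right-hand side over $t\ge 0$ (the optimum is $t=v/a$) yields $v/(v+a^2)$. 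Applying this with $Y=Z_n$, $v=\sigma^2$, $a=d$ and combining with the previous display completes the argument.

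There is no genuine obstacle beyond bookkeeping. The only points that need care are: (i) the \emph{direction} of the one-sided inequality — one must control the lower tail of $Z_n$, which is precisely why the hypothesis $\bar p>1/2$ enters; and (ii) the translation of the combinatorial threshold $n/2$ into a deviation of magnitude $d$ from the mean, using that for odd $n$ the strict inequality $Z_n>n/2$ is the same event as the strict majority $Z_n\ge (n+1)/2$, so no boundary term is lost.
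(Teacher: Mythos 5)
Your proof is correct. The paper does not prove this theorem itself -- it cites Ladha (1992) -- and your route via Cantelli's one-sided Chebyshev inequality is exactly the canonical derivation (the shape of the bound $d^2/(\sigma^2+d^2)$ is the signature of Cantelli applied to the lower tail of $Z_n$ at deviation $d=\mathbb{E}[Z_n]-n/2$). Your bookkeeping is right, including the identification $\mathrm{Var}(Z_n)=\sigma^2$ without any independence assumption and the observation that the standing hypothesis $\bar p>1/2$ is what makes the one-sided bound point in the useful direction.
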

Note that this theorem only requires knowledge of the pairwise correlations among the $X_i$, even though knowledge of all $k$-wise correlations are needed in order to fully reconstruct a correlated distribution.  The price of only looking at the pairwise correlations  is of course that the bound in the theorem is sometimes far from optimal, in the sense that  the probability ${\bf P_n}$ can be higher than what  the theorem guarantees. Adding  additional assumptions  about the the joint distribution for the $X_i$  can easily improve the bound.  An extreme example is given by the distribution which sets exactly $\lceil \frac{n}{2}\rceil$ of the variables to 1, with all such assignments given equal probability. With this distribution there is always a correct majority decision, while the bound in the theorem goes to 0 as $n$ grows. 

As a general rule we see that positive correlations reduce the bound for a correct decision and negative correlations improve the bound, the latter is the explanation for the example we just gave. At the same time, since we are ignoring higher order correlations,  one can construct examples of distributions with different correlations and the same probability for a correct decision \cite{Kani10}
An interesting question here is whether we can create negative correlations, or reduce the positive ones,  in a jury by e.g. selecting jury members which connect to  different basic moral foundations \cite{GHN09}. We will return to this, and correlations in general,  in Section \ref{corrsec}.

There has also been generalisations of the jury theorem which add additional elements to the set up, apart from the individual competencies.   In \cite{SK17} a version where each juror has some inherent biases is considered  and it is demonstrated that when biases are strong  the composition of the committee can strongly influence probability for a correct decision.   In \cite{LG10} costs are added to the problem of selecting a committee. This is done by selecting members from a pool of candidates each of which have both a known individual competence and a salary cost,  and also assigning  a cost to incorrect decisions.   Here it is shown that the expected total cost, for both salaries and  incorrect decisions, is minimised  for some committee composition which typically consists of a much smaller committee than the full pool of potential members.

\section{Competence and time}

Our basic set up is as follows: We have $n$ voters $i=1,\ldots,n$ each with an individual competence $p_i(t)$ regarding the issue at hand.  Here the individual competence $p_i(t)$ depends on the time $t$, with $t=0$ corresponding to the point where the voter is made aware of the issue, and some later time $t=T$ being the time at which the vote is held.   In this setting the group competence $P(t)$ is also time dependent and the way it develops will depend both on how the individual competencies $p_i(t)$ can be improved over time and on how correlations among the voters develop.

In this first discussion we will make the simplifying assumption that correlations are negligible in the final vote, and instead focus on the effect on different learning profiles for the individual competencies and the interaction between the number of voters, as well as the total cost of the whole process. By a learning profile we here mean the average individual competence $p(t)$ as a function of time. We will first discuss the behaviour of majority decisions for a specific simple form of learning profile, and then discuss other types of learning profiles and under which circumstances they are likely to occur.

\subsection{Individual  vs group competence}
The exact connection between a group's effective competence   and that of its individual members is a debated issue, see e.g. \cite{Kallestrup,pino2021group,BBB}. For the mathematical form of our result the details of this is not important, only the individual's effective competence matters. However, we will first look at two different  ways in which this effective competence may come about.

In our results we let $p_i(t)$ denote the competence of individual number $i$  on the issue to vote on, at time $t$.     When the individuals are part of a group performing 
some kind of deliberation on the issue at hand  this can have two effects on  $p_i(t)$. First, the competence of $X_i$ can increase when $X_i$ is regarded in isolation.  This is of course the effect we see in a one-person committee, where the entire improvement  in  $p_1(t)$  comes from the improvement  in the competence of that individual.

Second,  we may also see an improvement in  $p_i(t)$  which is only present as long as individual $X_i$ is part of the group.  As an example, let us assume that we gathered a committee to make a decision on the construction of a railway bridge over a gorge. Here we may have a geologist, an expert on explosives, a railway engineer, someone in charge of the budget, and several additional experts.   During the deliberation on how to design and build this bridge each committee member is likely to learn new things which raise $p_i(t)$ by some amount. In addition to this lasting increase of their individual competences, they may also gain an effective increase in their  $p_i(t)$  which is group dependent rather than lasting.  For example, the railway engineer will be able to dismiss  some infeasible designs thanks to the knowledge of the geologist, and will be able to make additional improvements as long as the geologist is part of the committee. This added effective competence will however mostly be lost when the committee is dissolved, since the engineer is unlikely to have learnt all the relevant knowledge of the geologist.  

A real-life example of many of these issues can be found in the Polymath projects, an online collaboration aimed at solving some open problems in mathematics. The original Polymath project started in the blog  of the Fields medalist Timothy Gowers \cite{G09} and successfully solved the problem suggested by him, leading to two published papers. The collaborative aspects of the projects have been analysed in several papers \cite{B10,CK11}.


\subsection{Improving competence over time}
Here we shall also assume that from the point when a committee is formed or an election is announced, the voters will undertake deliberation or other learning activities which improve their competence of the issue at hand.  This learning activity can take on many different forms. For a single individual this can be some form of individual study and investigation, as must be the case for a single person committee. Those activities are also available for members of a larger group, but here some form of group deliberation may also take place. 

We will refer to the way $p_i(t)$ changes as a \emph{ learning profile}. This  term only refers to the change in the value of $p_i(t)$, not the process behind how that change comes about. 

As we will see there are  some resource-constrained situations in which a large group can outperform a smaller one  if the learning profile for the larger group is faster than that for the smaller group. This can happen trivially if the members of the larger group are simply better at learning than those in the smaller group, but a more interesting situation is when this instead comes about as a genuine group  advantage. For example, by bringing in different  background competencies, as in the bridge example above, or by delegating different parts of the fact-finding process to different members, the group might improve the joint competence in a more efficient way.

In our results we will focus on the learning profile and see how different learning profiles affect the group competence.  A question which we leave open is how different modes of collaboration and deliberation give rise to different learning profiles. 

\section{The probability for a correct decision by a simple majority vote}
Here we quickly recall a few basic fact about the probability $P(n,p)$  that a group of $n$  independent individuals with competence $p$  reach a correct  decision, when voting  under 
unweighted majority.
$$P(n,p)=\sum_{\lceil n/2\rceil}^{n}  p^i(1-p)^{n-i}{n \choose i} $$
The function $P(n,p)$ has some useful properties:
\begin{enumerate}
	\item  For $p\geq 1/2$,  $P(n,p)$ is a concave increasing function in $p$.	
		
	\item  For $p\geq 1/2$,  $P(n,p)$ is an increasing function in $n$.
	
	\item For large $n$,  we have that $\frac{d}{dp}P(n,1/2) \sim \sqrt{n2/\pi}$.	
	
\end{enumerate}

In order to help guide intuition, in In Figure \ref{probs}   we display $P(n,p)$ for a few values of $n$ and $p$ in the interval $[\frac{1}{2},1]$

\begin{figure}
  \begin{center}
    \includegraphics[width=0.48\textwidth]{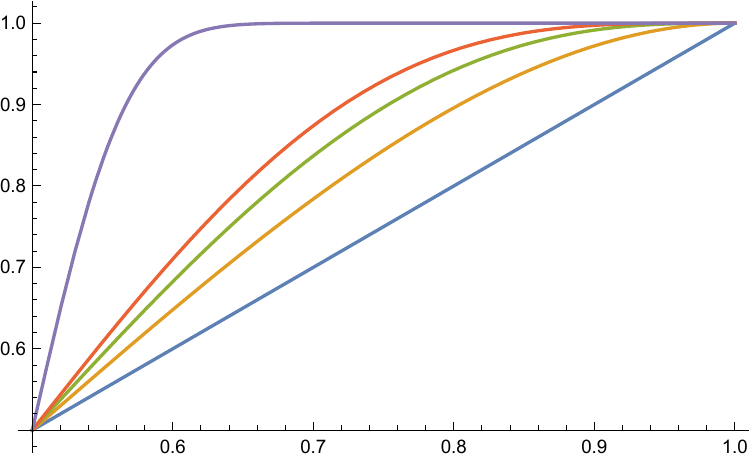}
  \end{center}
  \caption{The probability for a correct decision as a function of $p$,  for $n=1,3,5,7,91$.}\label{probs}
\end{figure}

\section{Linear learning profiles}
Here we will first examine one of the simplest non-constant learning profiles.  We let $p(t)=1/2 +c t$, where $c$ is a constant, for $t<1/c$, and $p(t)=1$ for $t\geq 1/c$. Here we see a linear improvement in the average competence until the average reaches 1, and then the competence stays at 1. In most situations this is an unrealistically efficient learning profile, but it gives rise to the same qualitative behaviour as more realistic ones, and makes the mathematical analysis easier to follow.   In some cases we will use different values of $c$ for different $n$ and then denote this value by $c_n$.

\subsection{Fixed total time}
Let us now consider the situation where we have a fixed total amount of time $T$, and we can either let one voter use the whole amount or instead let $n$ voters use $T/n$ each.   This would  e.g. correspond to the situation where we are setting up a committee. We have a given budget for salaries and are free to spend that budget on either a one member committee, working for a longer period of time, or a larger committee which has to finish earlier.   To make things explicit we will first take $n=3$.

We first take $c=1$. In the leftmost part of Figure \ref{fig1v3a} we display the group competence for the two group sizes a a function of $T$.  Here we can clearly see that unless $T$ is much larger than here, a single voter will achieve a higher group competence than 3 voters. If $T$ is so large that $p(T/3)=1$ then the group  competence for three voters will have caught up with that for a single voter.

\begin{figure}
  \begin{center}
  	\includegraphics[width=0.48\textwidth]{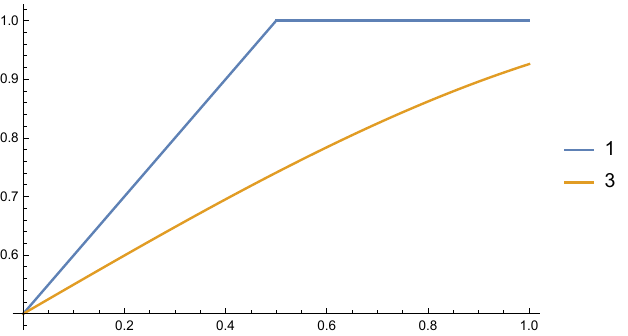}
	\includegraphics[width=0.48\textwidth]{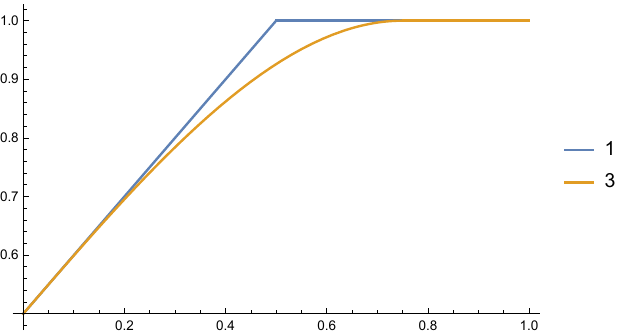}
  \end{center}
  
  \caption{Group competence  for $c_1=1$   and $c_3=1$ and $2$  }\label{fig1v3a}
\end{figure}	
\begin{figure}
  \begin{center}
	\includegraphics[width=0.48\textwidth]{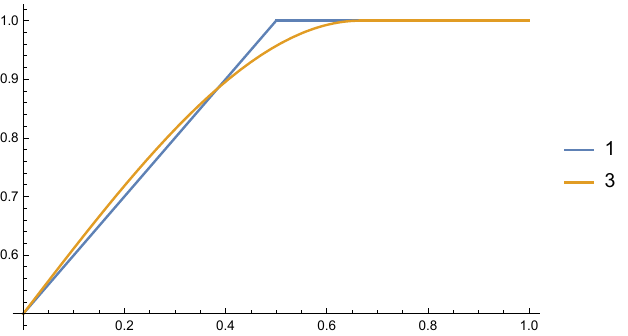}
	\includegraphics[width=0.48\textwidth]{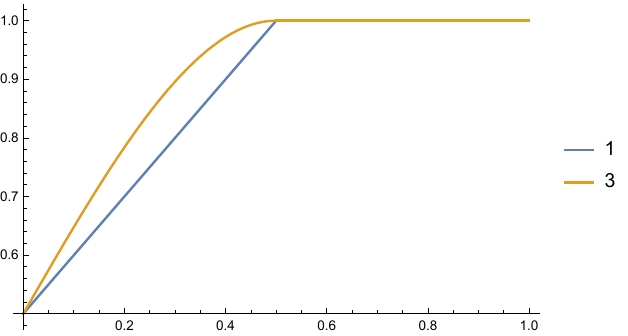}
  \end{center}
  
  \caption{Group competence  for $c_1=1$   and $c_3=2.25$ and $3$  }\label{fig1v3b}
\end{figure}

That this is the case can easily seen by calculating the derivative of the group competency with respect to $T$.   For $n=1$  the derivative is simply $c$.    For $n=3$ we take the derivative of 
$P_3(T)=(3 - 2 p(T/3)) p(T/3)^2$ which after a bit of algebra is $c/2 - (2 c^3 T^2)/9$.    At the point $T=0$ we thus have derivative $c$ for $n=1$  and $c/2$ for $n=3$.  Since the latter is smaller, and P(T) is a concave function of $T$,  the group competence for $n=3$ will remain smaller than that for $n=1$, until  $p(T/3)=1$.
 
So, with a bounded total time and a linear learning profile a single voter has the advantage as long as those in  the 3-member committee learn at the same rate as an isolated individual. 
We can instead look at what happens if the larger group now has a learning profile of the form $p_3(t)=1/2+c_3 t$ for $t\leq 1/c_3$. As long as $c_3/2\leq 1$ the previous argument still applies, and the single voter has the advantage.   For $c_3=2$ the two functions are tangent at $t=0$, but the ordering remains the same, with the single voter performing better than the group.  For $2<c_3<3$  the larger group outperforms the single voter for an initial range of $T$, and at a large value of $T$ the single voter regains  the advantage.    For $c_3>3$  the larger group has the advantage for all values  of $T$.  In Figures \ref{fig1v3a} and \ref{fig1v3b} we display the two functions for several different values of $c_3$.

Here we see that with a fixed time budget the larger group performs worse than a single voter unless the larger group actually takes advantage of the group to improve the learning rate, and this improvement in learning rate must be sufficiently large in order outperform a single voter.   So, we do not see any   "wisdom of the crowd" merely by having a crowd, communication is necessary.

The corresponding  critical values  $c^*_n$, after which a group of $n$ voters can perform better than a single voter,   for $n$ from 3 to 15 are  $${2, 8/3, 16/5, 128/35, 256/63, 1024/231, 2048/429}$$  so a group with $n=7$ members has to learn more than three times as fast as a single voter, and a group with $n=11$  more than four times as fast.

For larger groups the demand on the learning rate $c_n$ can be found asymptotically. For large $n$ the derivative of $P_n(t)$ is to leading order given by $\sqrt{n2/\pi}$ and when setting $t=T/n$ this gives a total derivative at $T=0$ of  $c_n \sqrt{\frac{2}{n  \pi}}$.   Thus, in order for a  group of size $n$, when $n$ is large,  to perform better than a single voter we must have $c_n>\sqrt{\frac{n  \pi}{2}}$.     Hence, the larger the group is, the more they need to be able to take advantage of other group members in order to improve the  average competence of the group.

\subsection{The cost of reaching a given group competence}
Instead of looking at which group competence we can reach with a given budget for the total time, we can also consider the cost  $C(n,P^*)$ of achieving a given group competence $P^*$ for 
different group sizes.    The function $P_n(p)$ is approximately linear as a function of $p$ for $|p-1/2|\leq \alpha/\sqrt{n}$, where $\alpha$ is a constant smaller than 1, and using this we can estimate the cost of reaching a given group competence.

In order to find the total cost for achieving a given group competence $P^*$ with $n$ voters we first find the value $p^*$ which gives $P_n(p^*)=P^*$, second we find that time $t^*$ such that $p_n(t^*)=p^*$  and finally compute the cost as $n t^*$.    Here we see that again the learning profile enters the cost, and due to the linear nature of $P(t)$ for small $t$ it turns out that the cost $C(N,P^*)$ behaves differently depending on whether the learning rate is below, on, or above, the critical learning ratio $c^*_n$ which we have already identified.    For learning rates  $c_n$ smaller than $c^*_n$ the cost grows with $n$ and is unbounded;  at the critical learning rate the cost converges; and for learning rates larger than the critical one, the cost is decreasing with $n$. In Figure \ref{cost-1} we show examples of the three different behaviours.  

\begin{figure}	
  \begin{center}
  	\includegraphics[width=0.40\textwidth]{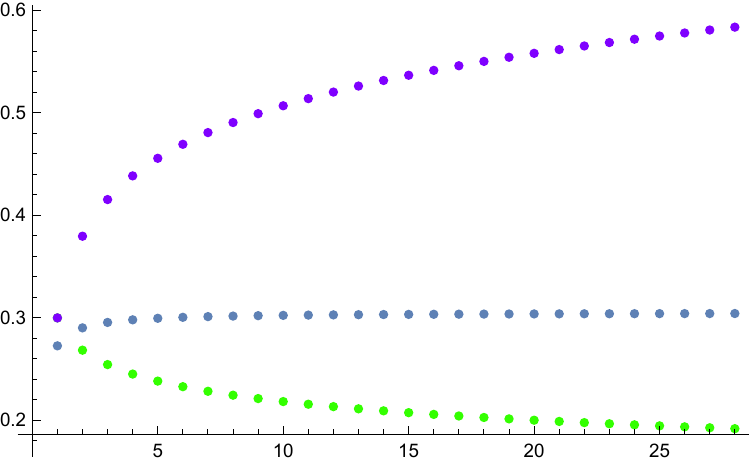}
  \end{center}
  
  \caption{The cost for reaching a given group competence, here $P^* =0.8$, as a function of $n$, for three different learning rates.}\label{cost-1}
\end{figure}

So, the critical learning rate $c^* _n$ characterises both when a larger group is able to perform better under a given time budget, and when a large group can achieve a given group competence at a lower cost than a single individual.

\subsection{Compensating for slow learning by increasing the group size}
In a setting where the time until a decision must be made is fixed  we may also ask how a single highly efficient expert investigator performs  compared to a larger group of non-experts.

Here we assume that the expert has learning profile $p(t)=\frac{1}{2}+c_1 t$ and the members of the larger group learn at a standardised rate $p_n(t)=\frac{1}{2}+ t$.

If $c_1=1$ then clearly the larger group will perform better, this is just the conclusion of the classic jury theorem, and any increase in $c_1$ will improve the performance of the expert. So a natural way to phrase this problem is to ask, from which value  $c_1(n)$ will the expert out-perform a group of $n$ non-experts, for small $t$?   The reason for first considering small $t$ is that unless we know more about the time until decision we cannot rule out situations like the one in the left part of Figure \ref{fig1v3b}, where the group initially has the higher competence but the single voter eventually overtakes the group.

For small values of $n$  this can be calculated directly by differentiating $P_n$ with respect to $t$.   For $n$ from 3 to 15 we then get the values of $c_1(n)$ as:
$$ 3/2, 15/8, 35/16, 315/128, 693/256, 3003/1024, 6435/2048$$
So, for $c_1(3)>3/2$  the single expert will outperform a group of three non-expert voters, while  for $n=7$ we find $c_1(7)=35/16$, and so the expert must learn more than $2.18...$  times as fast as the seven non-expert voters.

For larger  $n$ we can use the fact that we know the derivative of $P_n(t)$ to leading order  to get the approximation  $c_1(n)=\sqrt{\frac{n2}{\pi}}+o(1)$, where the $o(1)$-term is a positive error term converging to 0.

\subsection{The high competence range}
In each of the situations considered so far we focused on the range where the group competence  $P(n,p)$ is roughly linear in $p$.   This is relevant for situations where we begin with an average individual competence close to 1/2  and time does not continue long enough to reach competence much larger than $p=1/2+C/{\sqrt{n}}$, for some constant $C>0$.  If $t$ is allowed to be much larger than this  then the group competence will reach a region where it is close to 1 and almost constant, and so quite insensitive to further improvements in $p$. This corresponds to  the nearly horizontal part of the graphs in Figure \ref{probs}.

There is of course also a middle range where $p$  is close to where $P(n,p)$ changes from growing linearly in $p$ to being nearly constant. This  is the region close to $p=0.6$ for $n=91$ in Figure  \ref{probs}.   In this region the linear approximation is no longer valid and in order to see where a larger or smaller group has the advantage we have to make a calculation for those two specific values of $n$, rather than using a simplifying approximation.    Given the relatively simple form of $P(n,p)$ this can quickly be done using computer algebra, for any concrete pair of group sizes and explicit learning profiles. Hence, our discussion already covers the range where results can be given a simple explicit form.

\section{Learning profiles}
So far our discussion has focused on linear learning profiles and while these provide clear examples, and their properties are representative for many more general profiles as well,  they are not likely to be exact models for the growth  of competence in real life cases.   In this section we will discuss both how general learning profiles can differ from the linear case and which type of learning profile we might see in different settings.

\subsection{General learning profiles}
For completely general learning profile very little can be said, since this allows e.g. an oscillating  mean competence. However, we can identify some general features of certain classes of learning profiles.

For learning profiles which are continuous and weakly monotone increasing to 1 (i.e. allowed to remain constant for some time intervals but not to decrease) and do not depend on the number of voters $n$, we see the same type of behaviour as in the linear case.  For profiles of this type we can rescale the time  and map the group competence to that of a linear profile, with a more complicated time dependency for the larger group.  Here the qualitative behaviour, when comparing a single learning profile for different numbers of voters,  is the same as for a linear profile, but the time dependency can be more complicated.

If the form of the learning profile depends on the number of voters  we can get stronger versions of some of the behaviours which we have seen in the linear case.    Let us compare a  linear profile for a single voter with a concave profile for several voters. As an example we can take $p_1(t)=\min(1/2+t,1)$ as the individual competence for the single voter and $p_3(t)=\min(1/2+t^{0.55},1)$ as the individual competence in a group of three voters. Here the individual competence in the larger group is a concave function of time, leading to  rapid growth for small values of $t$ and then a much slower growth for larger $t$.  On the left  in Figure \ref{lin-con} we display both the individual competences and the group competence as a function of the  total time $T$.   As we can see, the group competence for the group of three voters grows rapidly  and remains higher than that for the single voter until the single voter has managed to reach a very high competence and overtakes the larger group, due to the decreasing learning rate in the concave profile.    Here the preference between a larger and smaller group depends strongly on the time budget.  How distinct the behaviours of the two profile are depends on how concave the learning profile for the larger group is. In our simple example replacing the exponent $0.55$ by number closer to 1 will move us towards the linear case and making the exponent smaller will quickly make the profile even more strongly concave.   If we  have a convex profile, e.g. taking the exponent in our example to be $2$, then the advantage for the single voter will instead increase, as shown in to the right of Figure \ref{lin-con}.

\begin{figure}	
  \begin{center}
  	\includegraphics[width=0.40\textwidth]{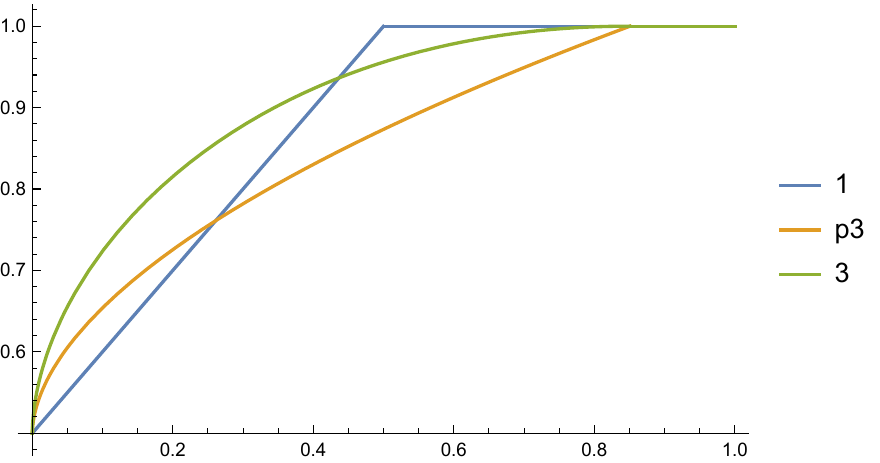}
	\includegraphics[width=0.40\textwidth]{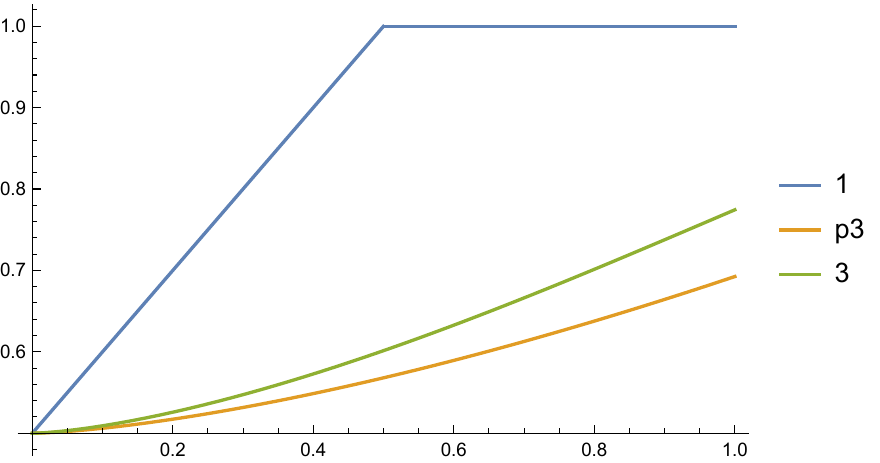}
  \end{center}  
  \caption{A concave (left) and a convex (right) learning profile for a group of three voters compared with a single voter}\label{lin-con}
\end{figure}

A case which instead shifts the long term advantage towards a larger group of voters is the set of profiles for which the individual competence does not converge to 1.  As a simple example we can consider the piecewise linear profiles $p_n(t)=\min(1/2 + a_n t,2/3)$. The difference between this and our earlier linear profiles is that for large $t$ the competence will plateau  and become $2/3$. The value of $a_n$ determines how quickly a group of $n$ voters reach this maximum individual competence level, but the maximum  value itself does not depend on $n$.     In Figure \ref{lin-bound} we plot the individual and group competences for one and three voters, both with $a_n=1$. Note that since the time axis denotes the total time $T$ used the group of three voter reaches competence $2/3$ at a higher total time than the single voter, but for both groups stop the individual competence stops at $2/3$.  However, for the larger group this is still amplified by the majority vote into a group competence which instead converges to $\frac{20}{27}\approx0.74$,  a clear advantage for the larger group. The absolute size of this advantage  will become larger with increasing group size, as long as the total time is large enough. 

\begin{figure}	
  \begin{center}
  	\includegraphics[width=0.40\textwidth]{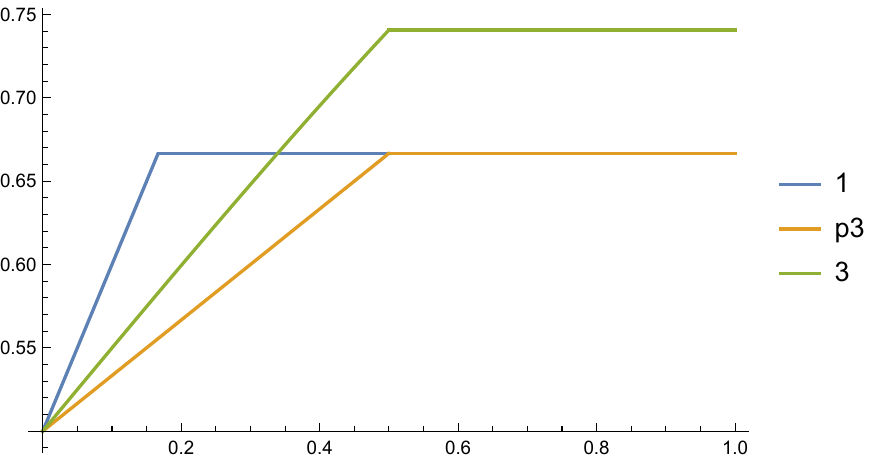}
  \end{center}
  \caption{A learning profile bounded away from 1,  for  groups of  one or three voters, as a function of the total time used.}\label{lin-bound}
\end{figure}

\subsection{Which learning profiles actually occur?}
So far we have demonstrated some of the qualitatively different  behaviours possible under different learning profiles.   This leads to the question of which learning profiles occur in real life cases, and under which circumstances. To a large extent this is an empirical question which should already be present in the research literature, we surmise, though of course not necessarily in the same format as here. We will here present some thoughts regarding which factors might affect the learning profile, and how.

First we note that due to the different types of deliberation involved we would expect to see different types of learning profiles in committee work and larger elections and referenda.   In a well functioning committee the deliberation is typically  quite organised and the members are chosen so that they complement each other's background competencies.  In large scale elections the electorate does not typically depend  on the question to be voted on, and deliberation is not organised in the same sense as for a committee. There are several distinct factors which appear in this description.

First, we can ask more generally how the group size affect the learning profile. In a small group deliberation is relatively easy to organise and one can assure, for example, that all members of the group are both heard and have access to all the other members. As the group size grows communication becomes costlier to organise, and at very large scales will even require physical infrastructure in order to function. So, unless adequate organisation and infrastructure are present we would expect each added member to contribute less and less to the learning rate, when the group size has reached above some threshold.

Secondly,  we can here reconnect to the initial discussion of heterogeneity and homogeneity in deliberation.  In a group which manages to leverage the different background competencies of the members we would expect the group to quickly make early progress and increase the effective competence of the group, hence leading to a learning profile with a rapid increased for small times.  Here we could e.g. see something much like the concave learning profiles in the previous section.

Apart from how these general features affect the learning profiles we can also consider dynamical models for how the individual competence in a group develops over time. Even quite simple such models can lead to both diverse and complicated behaviour.  Let us look at two very simple models.

In our first model we have three voters, indexed by $i=1,2,3$,  which start out at some competence $p_i(0)$ at time $t=0$.  For voter 1 the competence has time derivative which is $0.1(1-p_1(t))$. This means that voter 1 increases their competence, at a rate which slows down as the competence approaches 1.   The other two voters have a time derivative which is equal to the difference between their own   competence and the mean competence of the group. So, these voters drift towards the mean competence.  In Figure \ref{mean-drift} we display the three individual competences and the resulting learning profile of the group, in red.  As we can see the whole group is gradually improving thanks to being lifted by voter 1 which does not simply move towards the average.  Note that the competence of voter 2 initially decreases due to the  low initial mean competence, but eventually all voters have an increasing competence.

 \begin{figure}	
  \begin{center}
  	\includegraphics[width=0.40\textwidth]{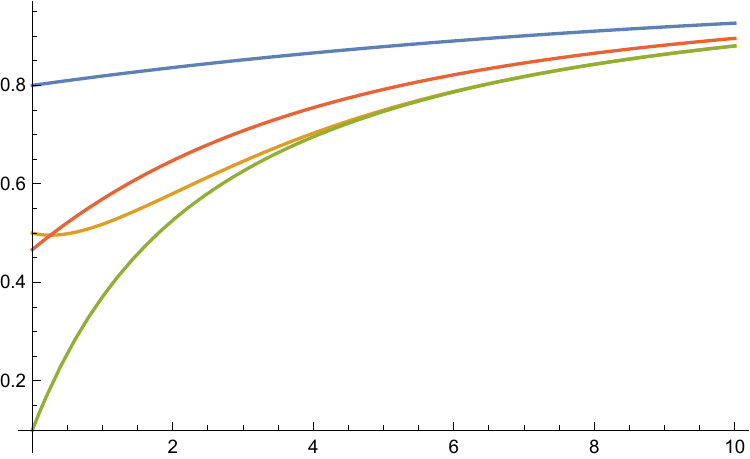}
  \end{center}
  \caption{The individual and group (red) competence in the mean drift model}\label{mean-drift}
\end{figure}

Let us now consider the same model with one modification. Instead of drifting towards the global mean competence each voter, except voter 1,  instead drifts towards the mean competence of those voters which are close enough to their own competence, giving an interaction window around each voter.  In Figure \ref{mean-drift-win} we display the behaviour of this model with four voters and three small modifications.    In the first case we see a behaviour which is similar to our previous model. Some of the voters   have initially decreasing competencies but the voters are spread evenly enough for the positive influence from voter 1 to affect all other voters and the learning profile increases towards 1.   In the second case, middle figure,  the initial competence of voter 3 is slightly lower than in the first case and due to this  the competence of voter 3 is initially decreasing and voter 2 leaves the window of voter 3, who then instead converges with voter 4 at a low competence. Here the learning profile does no converge to 1 as time increases, due to group stuck at the lower competence.    In the third case, rightmost part of the figure,  the initial competences are the same as in the first case, but the derivative of the competence of voter 1 has been multiplied by 2. So the only difference is that voter 1 learns twice as fast.  Just as in the first case the competence of voter 2 is initially decreasing, but now voter 1 improves so fast that they leave the window of voter 2. As a consequence voters 2 to 4 now converge towards their mean competence, which is just slightly above $0.5$.  So here the improved learning rate of voter 1 led to a fragmentation, again leading to a learning profile which does not converge to 1.

 \begin{figure}	
  \begin{center}
  	\includegraphics[width=0.30\textwidth]{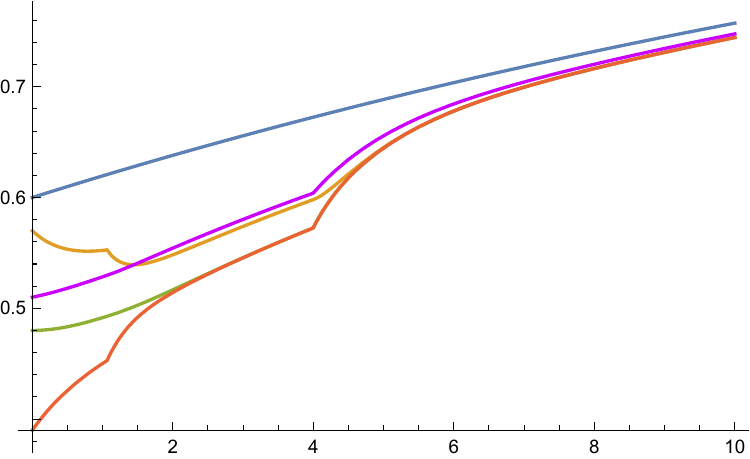}
	\includegraphics[width=0.30\textwidth]{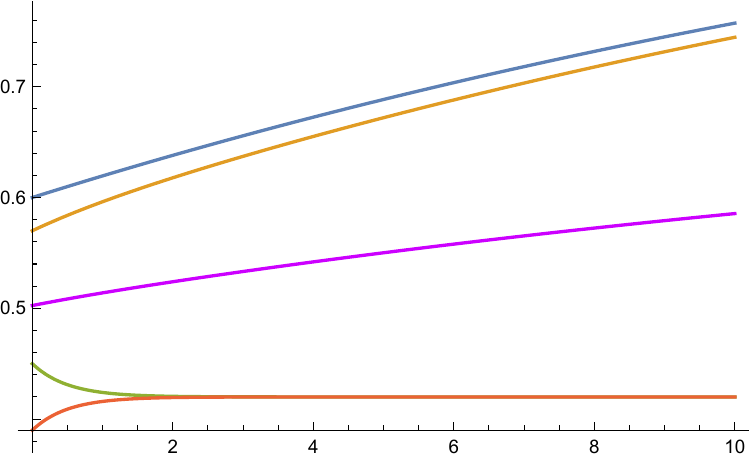}
  	\includegraphics[width=0.30\textwidth]{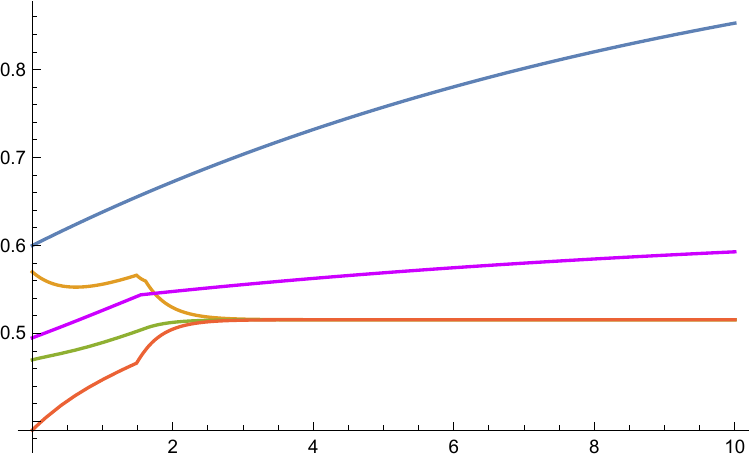}
  \end{center}
  \caption{The individual and group (magenta) competence in the mean drift model with a window}\label{mean-drift-win}
\end{figure}
 
Interestingly, the latter model has some similarities with model which have been studied in opinion dynamics, though the motivation for the mechanism are somewhat different. One such example is the Deffuant model \cite{Deffuant}  in which agents interact pairwise on some network and move towards a common opinion if their current opinions are not too far apart.  Here it has been proven \cite{HT14} that for many networks  there is threshold value for the acceptable opinion difference such that below this the opinions fracture into opinion-clusters, while above this threshold the network converges   to a consensus opinion.  For large scale electorates similar network effects on competence could very well come into play. 

\subsection{Growing competence and correlations}\label{corrsec}
In our discussion so far we have generally assumed that voters are independent, i.e. that there are  no correlations between them apart from what their individual competence levels imply.  However under most realistic models of a deliberative procedure one would expect that interactions during the deliberation will also create some correlations. As we mentioned in our discussion of existing jury theorems, negative correlations are actually beneficial but one might worry that positive correlations will reduce the positive effects of the majority vote.  A detailed discussion and critique of various forms of independence assumptions is given by Dietrich and Spiekerman in \cite{DS17,DS23}.  It is clear that strong enough correlations can significantly lower the probability for a correct vote, but as we saw in Theorem \ref{corrthm} this effect can be controlled in terms of the average strength of the correlations. More generally, the probability for a correct decision is a continuous function of the full probability distribution for the set of votes, in terms of the total variation distance for probability distributions. This means that none of the conclusions we have drawn are  sensitive to the qualitative independence assumption. Rather, the probability for a correct decision will change continuously with the strength of the correlations when such are present.

It is sometimes claimed that strong correlations are one of the main drivers for why majority votes in very large groups are not as near-infallible as the jury theorem might lead one to think.  However, taking examples such as the ones we have just seen into account  one might instead ask if this is not instead, or additionally,  due to having competence levels which are much closer to 0.5 in large groups than in small ones.  We have already mentioned that learning profiles for large groups might increase much slower due to the cost of communication, leading to a lower final competence.  In the past this might to some extent have been compensated for by having correlations which decay rapidly with physical distance within a country, making local communities more or less independent of each other.    Today that positive effect may have been reduced due to rapid communication and  effects of mass-media. The interplay between the learning profile on these large scales and the creation of correlation, by intention or inadvertently,  is certainly worth further scrutiny.

\section{Summary}

In this paper, we have investigated the interaction between the number of voters, the development of their competence over time and the probability for an optimal majority decision. We have developed a model that captures the potentially positive effects of having a heterogeneous group of voters on majority decisions in a more satisfactory way than earlier attempts.

We first considered the situation where we have a fixed total amount of time $T$ and we can either let one voter use the whole amount or instead let $n$ voters use $T/n$ each. This would, for instance, model a situation where we have a given budget for salaries and can spend that budget on either a one member committee, working for a longer period or a larger committee which must finish earlier. Here we showed that with a fixed time budget the larger group performs worse than a single voter unless the larger group actually takes advantage of the group to improve the learning rate, and the improvement in learning rate must be sufficiently large in order outperform a single voter. So, importantly, we do not see any ”wisdom of the crowd” merely by having a bigger crowd; this result shows that communication is essential, even for a crowd. 

Secondly, we considered the cost $C(n,P^{*})$ of achieving a given group competence $P^*$ or different group sizes. Here we showed that the learning profile enters the cost. Due to the linear nature of $P(t)$ for small $t$ it turns out that the cost $C(n,P^*)$ behaves differently depending on whether the learning rate is below, on, or above, the critical learning ratio $c^{*}_n$. For learning rates $c_n$ smaller than $c^*_n$ the cost grows with $n$ and is unbounded; at the critical learning rate the cost converges; and for learning rates larger than the critical one, the cost is decreasing with $n$. Hence, the critical learning rate $c^*_n$ characterises both when a larger group is able to perform better with a given time budget, and when a large group can achieve a given group competence at a lower cost than a single individual.

We also discussed how a single highly efficient expert investigator performs compared to a larger group of non-experts in a setting where the time until a decision must be made is fixed. We also discussed the case where the individual competence is close to 1, in which case the behaviour of the group competence becomes rather simple.

Our results above are for the class of linear learning profiles. As we point out in Section 6, these profiles display  many qualitative features of more general learning profiles, but we do not expect real life profiles to be as simple as that. As we have seen the advantage can shift between smaller and larger groups depending on the shape of the profile. In particular, we always have a long-time advantage for a large group when the individual competence cannot grow above some value which is strictly less than 1.  On the other hand, as the time becomes longer there is also more room for more complicated group dynamics to begin to influence the competence.  As our examples show, even simple models for group dynamics can display very varied behaviours. A longer time span also leaves more room for new correlations to develop and these can potentially offset the advantage of a larger group. 

All together this demonstrates the need for a better empirical understanding of both how the learning profile depends on the setting, and  how  correlations can be kept down in situations where interactions are required for competence growth.

\section*{Acknowledgments}
We are grateful for comments from Franz Dietrich. Financial support from the Swedish Research Council (VR 2015-01588) and Marianne and Marcus Wallenberg’s Foundation (MMW 2015.0084) are gratefully acknowledged.

a\providecommand{\bysame}{\leavevmode\hbox to3em{\hrulefill}\thinspace}
\providecommand{\MR}{\relax\ifhmode\unskip\space\fi MR }
\providecommand{\MRhref}[2]{%
  \href{http://www.ams.org/mathscinet-getitem?mr=#1}{#2}
}
\providecommand{\href}[2]{#2}

\newpage
\section*{Appendix}
\begin{thm}
	Here we let $P_n(\bf{p})$ denote the probability of a majority for the correct outcome and set  $\bar{p}=\frac{1}{n}\sum_{i=1}^n p_i$.
	\begin{enumerate}
		\item  If $p>1/2$ is  fixed  then $P_n({\bf p}(p)) \rightarrow 1$ monotonically with $n$. 
		
		\item   If $ \bar{p}= p> \frac{1}{2} $ for some fixed $p$ then $P_n \rightarrow 1$ and $P_n({\bf p })\geq P_n(\bar{p})$.

		\item  If $\bar{p}=\frac{1}{2}+\frac{\omega(n)}{\sqrt{n}}$, where $\omega(n)$ is any increasing, unbounded, function of $n$, then $P_n \rightarrow 1$.
		
	\end{enumerate}	
\end{thm}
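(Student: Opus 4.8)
The plan is to treat the three convergence statements uniformly by a single second–moment estimate, and then to add the two pieces of genuinely extra content: the monotonicity in part~1 and the comparison inequality $P_n(\mathbf p)\ge P_n(\mathbf p(\bar p))$ in part~2.

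For the convergence, write $Z_n=\sum_{i=1}^n X_i$, so that $\mathbb E[Z_n]=n\bar p$ and, by independence, $\mathrm{Var}(Z_n)=\sum_i p_i(1-p_i)\le n/4$. Since $\bar p>1/2$ we have $n\bar p>n/2$, and the event that the majority is wrong sits inside a deviation event:
$$ 1-P_n=\Pr\!\left(Z_n\le \tfrac n2\right)\le \Pr\!\left(\,|Z_n-n\bar p|\ge n\bar p-\tfrac n2\,\right)\le \frac{\mathrm{Var}(Z_n)}{\left(n\bar p-n/2\right)^2}\le \frac{n/4}{\left(n\bar p-n/2\right)^2}. $$
In parts~1 and~2 the denominator is $n^2(p-1/2)^2$, so the bound is $O(1/n)\to 0$; in part~3 it is $n\,\omega(n)^2$, so the bound is $1/(4\omega(n)^2)\to 0$ because $\omega$ is unbounded. (Indeed part~3 already contains the convergence halves of parts~1 and~2, on taking $\omega(n)=(p-1/2)\sqrt n$.) This settles every ``$\to 1$'' claim.

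For the monotonicity in part~1 I would use the standard two–step recursion. Conditioning, when passing from $n$ to $n+2$ homogeneous jurors, on how the two extra jurors split --- both correct (probability $p^2$), split ($2p(1-p)$), both wrong ($(1-p)^2$) --- and collecting terms using $\binom{n}{(n-1)/2}=\binom{n}{(n+1)/2}$ gives
$$ P_{n+2}(\mathbf p(p))-P_n(\mathbf p(p))=\binom{n}{(n-1)/2}\bigl(p(1-p)\bigr)^{(n+1)/2}(2p-1), $$
which is strictly positive for $1/2<p<1$; combined with the limit already proved, this yields monotone convergence to $1$.

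For the comparison in part~2 I would fix all coordinates of $\mathbf p$ except two, say $p_i,p_j$, and condition on $S=\sum_{k\ne i,j}X_k$. Since $X_i+X_j\in\{0,1,2\}$, one obtains
$$ P_n(\mathbf p)=\Pr\!\left(S\ge\tfrac{n+1}2\right)+\Pr\!\left(S=\tfrac{n-1}2\right)(p_i+p_j-p_ip_j)+\Pr\!\left(S=\tfrac{n-3}2\right)p_ip_j, $$
so, holding $p_i+p_j$ fixed, $P_n$ is an affine function of the single quantity $p_ip_j$ with slope $\Pr(S=\tfrac{n-3}2)-\Pr(S=\tfrac{n-1}2)$. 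As $p_ip_j$ is concave along the segment $p_i+p_j=\mathrm{const}$ and maximal at $p_i=p_j$, an equalising (Robin Hood) step on the pair moves $P_n$ in the direction of that slope; iterating such steps drives $\mathbf p$ toward $\mathbf p(\bar p)$, so $P_n(\mathbf p)\ge P_n(\mathbf p(\bar p))$ provided the slope is $\le 0$ at every step, i.e.\ provided the conditional law of $S$ always puts at least as much mass at $(n-1)/2$ as at $(n-3)/2$. This is exactly where the real work sits: pinning down the sign of the slope along a suitable equalising path, which is cleanest to extract from Hoeffding's combinatorial inequality \cite{hoeffding1956} (equivalently Gleser's majorisation criterion) --- it yields $P_n(\mathbf p)\ge P_n(\mathbf p(\bar p))$ precisely when the threshold $(n+1)/2$ does not exceed the mean $n\bar p$, which for a fixed $\bar p=p>1/2$ holds for all but finitely many $n$. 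I expect this last point --- identifying the regime in which moving toward homogeneity lowers rather than raises the probability --- to be the main obstacle; the Chebyshev estimate and the recursion are routine.
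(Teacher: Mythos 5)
Your proposal is correct and, in two respects, more complete than the paper's own proof, while taking a more elementary route in a third. For the convergence statements the paper invokes Hoeffding's exponential concentration inequality $\Pr(|Z_n-n\bar p|>t)\le 2\exp(-t^2/n)$, whereas you use Chebyshev with $\mathrm{Var}(Z_n)\le n/4$; since only convergence and not a rate is claimed, the second-moment bound is entirely adequate and your observation that part 3 subsumes the convergence halves of parts 1 and 2 matches the paper's logic. Where you genuinely add something is the word \emph{monotonically} in part 1: the paper's concentration argument only yields $P_n\to 1$ and never addresses monotonicity, while your two-step recursion $P_{n+2}({\bf p}(p))-P_n({\bf p}(p))=\binom{n}{(n-1)/2}\bigl(p(1-p)\bigr)^{(n+1)/2}(2p-1)>0$ is correct (it follows from conditioning on the two added jurors exactly as you describe, using $\binom{n}{(n-1)/2}=\binom{n}{(n+1)/2}$) and supplies the missing piece. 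For the comparison inequality in part 2 both you and the paper ultimately defer to Hoeffding (1956), so your Robin Hood sketch need not be completed; but your caveat about the applicability condition is substantive and correct. Hoeffding's theorem gives $P_n({\bf p})\ge P_n({\bf p}(\bar p))$ only when the majority threshold $(n+1)/2$ does not exceed the mean $n\bar p$, i.e.\ when $\bar p\ge \tfrac12+\tfrac1{2n}$, and outside that regime the inequality can fail: for $n=3$ and $\bar p=0.6$ the jury ${\bf p}=(1,0.4,0.4)$ gives probability $1-0.6^2=0.64$, whereas the homogeneous jury gives $3(0.6)^2(0.4)+(0.6)^3=0.648$. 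So the ``all but finitely many $n$'' qualification you flag is real, and the theorem as stated (and as cited from Kanazawa) silently assumes $n\ge 1/(2p-1)$; identifying this is a point in your proposal's favour rather than a gap in it.
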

\begin{proof}
	Let $X$ denote the, random, sum of the individual votes
	$$X=\sum_{i=1}^{n}x_i,$$
	where $x_i$ can be 0 or 1. with probabilities $1-p_i$ and $p_i$ respectively. 	Here $x_i=1$ denotes a vote for the correct outcome,  $x_i=0$ the opposite vote, and the group decision is correct if $X>\frac{n}{2}$.
	
	The expectation of $X$  is given by $E[X]=\sum_{i=1}^n p_i=n \bar{p}$  and since the variables are independent Hoeffding's concentration inequality tells us that  $Pr(|X-E[X]|>t)\leq 2exp(-t^2/n)$. So with    $\bar{p}=\frac{1}{2}+\frac{\omega(n)}{\sqrt{n}}$   we find that  $P(X<n/2)<2exp(-\omega^2(n))$, and hence that the probability for a correct majority decision tends to 1  if $\omega(n)\rightarrow\infty$.  This yields both (1) and (3).
	
	Part (2) follows directly from \cite{hoeffding1956}.

\end{proof}

\end{document}